\newcommand{\cc}{\mbox{$\mathcal C$}}
\newcommand{\nn}{\mathbb{N}}
\newtheorem{fact}{Fact}
\titlerunning{On the Combinatorial Power of the Weisfeiler-Lehman Algorithm}
\title{On the Combinatorial Power of the Weisfeiler-Lehman Algorithm}
\author{Martin F\"urer%
\thanks{This work was partially supported by NSF Grant CCF-1320814.
Part of this work has been done while visiting Theoretical Computer Science, ETH Z\"urich, Switzerland}}
 \institute{Department of Computer Science and Engineering\\
Pennsylvania State University \\ 
 University Park, PA, USA\\
  \email{furer@cse.psu.edu}}
  \authorrunning{Martin F\"urer}
\begin{document}

\maketitle \thispagestyle{plain} \pagestyle{plain}

\begin{abstract}
The classical Weisfeiler-Lehman method WL[2] uses edge colors to produce a powerful graph invariant. It is at least as powerful in its ability to distinguish non-isomorphic graphs as the most prominent algebraic graph invariants. It determines not only the spectrum of a graph, and the  angles between standard basis vectors and the eigenspaces, but even the angles between projections of standard basis vectors into the eigenspaces. Here, we investigate the combinatorial power of WL[2]. For sufficiently large $k$, WL$[k]$ determines all combinatorial properties of a graph. Many traditionally used combinatorial invariants are determined by WL$[k]$ for small $k$. We focus on two fundamental invariants, the number of cycles $C_p$ of length $p$, and the number of cliques $K_p$ of size $p$. We show that WL[2] determines the number of cycles of lengths up to 6, but not those of length 8. Also, WL[2] does not determine the number of 4-cliques.
 \end{abstract}

\noindent
{\bf Keywords: Weisfeiler-Lehman algorithm, graph invariants, counting cycles, graph isomorphism}

\section{Introduction}
\subsection{Weisfeiler-Lehman Method}
Two graphs are isomorphic, if there is a bijection of their vertices mapping edges to edges and non-edges to non-edges. An automorphism of a graph is an isomorphism from the graph to itself.  The graph isomorphism problem is closely connected to the graph automorphism problem. Two connected graphs $G$ and $G'$ with disjoint vertex sets are isomorphic, iff their union has an automorphism mapping one vertex of $G$ into a vertex of $G'$. Obviously in this case, all vertices of $G$ are mapped to vertices of $G'$.

The most natural and most practical way to detect that two graphs are not isomorphic is vertex classification \cite{ReadC77}. The idea is to give different colors to two vertices whenever it is obvious that neither of them can be mapped to the other one by an isomorphism. Thus vertex classification could start by coloring the vertices by their degree. One can easily go further. If one vertex $u$ has more neighbors of a certain degree than another vertex $v$, then obviously $u$ and $v$ should also be colored differently. 

A simple way to capture these observations, is to start with all vertices of $G=(V,E)$ having the same color, and then refining the coloring in rounds. In round $i+1$, vertices $u$ and $v$ receive different colors, if they already had different colors in round $i$, or if the multisets of colors of neighbors of $u$ and $v$ in round $i$ are different.\footnote{A multiset differs from a set by assigning a positive integer multiplicity to each element.} During each round some color classes are split into two or more classes, until this process stops after at most $n=|V|$ rounds.
Nowadays, this method of vertex classification is also known as WL[1].
It is at the heart of all software tools for graph isomorphism testing.

The classical Weisfeiler-Lehman method WL[2] \cite{WeisfeilerL76}, classifies edges in a similar way. Still, it is a bit more involved. 
In fact, all ordered pairs of vertices are classified, not just the edges. In other words, we can think of handling a complete directed graph with colored edges, including self-loops in all vertices.

At the start, the edges of the complete graph are partitioned into 3 color classes: the previous edges, the previous non-edges, and the self-loops. In round $i+1$ every directed edge $(u,v)$ is colored with a pair whose first component is its previous color, and whose second component is the multiset of all pairs of previous colors on paths of length 2 from $u$ to $v$. 
In each round, the actually occurring colors are lexicographically ordered and replaced by an initial segment of the natural numbers.
This time, after $O(n^2)$ rounds, the algorithm stops, because a {\em stable coloring} is reached, i.e., no color class of edges is further divided.

Sometimes, it is useful to keep for each round the mapping assigning to each detailed color (pair of old color and some multiset) a simplified color (small integer). We refer to this information as the definition of colors.

It has been noticed that WL[2], has a natural $k$-dimensional extension WL$[k]$ by various researchers, including some authors of \cite{CaiFI92} who tried to prove that WL$[k]$ solves the graph isomorphism problem for graphs of degree at most $k$. It seems that the first published definition of WL$[k]$ has been in \cite{BabaiM80}. 
The CFI algorithm \cite{CaiFI92} has introduced and popularized the term WL$[k]$ at the suggestion of Babai as an editor to honor the influence of Weisfeiler and Lehman \cite{WeisfeilerL76} towards the development of this algorithm.  

Weisfeiler and Lehman did not use the WL$[k]$ algorithm, but extended WL$[2]$ by individualizing a sequence of vertices.
A sequence $v_1, \dots, v_{\ell}$ is individualized by giving a unique color to each vertex of the sequence before the WL$[k]$ algorithm starts. Note that WL$[k+\ell]$ is at least as powerful as doing WL$[k]$ for every possible individualization of $\ell$ vertices.

WL$[k]$ is defined as follows. The initial color $W^0(v_1, \dots , v_k)$ is according to the isomorphism type of $(v_1, \dots , v_k)$.
To be precise, $(u_1, \dots , u_k)$ is isomorphic to $(v_1, \dots , v_k)$ if
\begin{itemize}
\item 
for all $i,j$, $u_i = u_j$, iff $v_i = v_j$, and
\item
for all $i,j$, $\{u_i,u_j\} \in E$, iff $\{v_i,v_j\} \in E$.
\end{itemize}
For each coloring $f: V^k \rightarrow C$ and each $w \in V$, define the operation
\begin{eqnarray*}
\lefteqn{\mbox{sift}(f, (u_1, u_2, \dots, u_k), w))} \\
& = & \langle f(w, u_2, \dots, u_k), f(u_1, w, \dots, u_k), \dots,
f(u_1, u_2, \dots, w) \rangle.
\end{eqnarray*}
Hence, $\mbox{sift}(W^i, (u_1, u_2, \dots, u_k), w))$ is the $k$-tuple of $W^i$ colors of the $k$-tuples arising from substituting $w$ in turn for each of the $k$ positions in $(u_1, u_2, \dots, u_k)$. Thus, intuitively in each round of WL[3], triangles $T$ are colored by the multiset of the triples of colors used on the triangular faces of the tetrahedra with one face being $T$. To be precise, actually ordered triples of vertices are used instead of triangles.
Now the next color of $(u_1, u_2, \dots, u_k)$ is 
\[(f(u_1, u_2, \dots, u_k), \mbox{multiset}\{
\mbox{sift}(f, (u_1, u_2, \dots, u_k), w) \mid w \in V \}).\]

It should be noticed that for every $k$, WL$[k+1]$ is at least as powerful as WL$[k]$, because every stable coloring \cc\ of the $k+1$-tuples defines a stable coloring $\cc'$ of the $k$-tuples by $\cc'(v_1, \dots , v_k) = \cc(v_1, \dots, v_k, v_k)$, i.e., by just repeating the last component. Thus for example WL[2] does not only color the edges, but also the vertices. The color of a vertex $v$ shows up as the color of the self-loop at $v$.
The stable partition of the $k$-tuples of vertices produced by WL$[k+1]$ is at least as fine as that produced by WL$[k]$, because WL$[k]$ produces the coarsest stable partition of the $k$-tuples.

\subsection{Graph Invariants}
A {\em graph invariant} is any function defined on graphs whose value is constant on classes of isomorphic graphs. In particular, the value does not depend on the enumeration of the vertices. In other words, a graph invariant is a function defined on adjacency matrices whose value does not change, when the same permutation is applied to the rows and columns of an adjacency matrix.

Many simple combinatorial graph invariants are often used to quickly conclude that two graphs are non-isomorphic. Some such invariants are, the number of vertices $n$, the number of edges $m$, the number of triangles, the degree (maximum number of neighbors of any vertex), the multiset of degrees of vertices. Graph invariants can also be just boolean properties like being bipartite, being connected, being acyclic, or containing a given graph as a subgraph or induced subgraph. 

Some more complicated invariants are obtained by counting cliques and cycles. 
We use the words {\em path} and {\em cycle} to refer to a simple path or simple cycle respectively (i.e., an open or closed vertex disjoint path).
More precisely, we refer to the set of their edges. Thus, e.g., a $K_3$ consists of 1 cycle. 

Let $C_k^v$ be the number of $k$-cycles with one vertex being $v$. Now the multiset of all $C_k^v$ for fixed $k$ and varying over all $v \in V$ is a nice invariant. Similar invariants are obtained by varying over all edges instead of vertices, and by considering $k$-cliques instead of $k$-cycles. We will mainly focus on the graph invariants $\#k$-cliques, the total number of subsets of $k$ vertices forming a complete graph, and $\#k$-cycles, the number of cycles of length $k$ which are occurring in the given graph. 

The ultimate combinatorial invariant is obtained by the WL$[k]$ method. Its strength increases with $k$, and it determines the isomorphism type for $k=n$. We call the invariant WL$[k]$ too. The invariant consists of the multiset of colors of $k$-tuples in the stable refinement, together with all the definitions of colors occurring during the coloring rounds. 

A graph invariant identifies a graph $G$ in a class of graphs, if all graphs in the class with the same invariant as $G$ are isomorphic to $G$. In other words, up to isomorphisms, $G$ is the only graph in the class with this invariant. A graph invariant identifies a graph $G$, if it identifies $G$ in the class of all graphs. An invariant identifies a class of graphs, if it identifies all graphs $G$ of this class in the class of all graphs.
For example, the spectrum does not identify the trees, while the lexicographically first adjacency matrix (varying over all enumerations of the vertices) identifies all graphs. Of course, no fast algorithm is known to compute the lexicographically first adjacency matrix of a graph. 

WL$[n]$ trivially identifies all graphs of size at most $n$. On the other hand, even WL[1] is sufficient to identify almost all graphs  \cite{BabaiES80}. In fact, for almost all graphs, WL[1] stops after the second round with all vertices receiving distinct colors. The remaining graphs can be handled sufficiently fast to obtain an $O(n^2)$ expected time algorithm \cite{BabaiK79} (linear in the input size of a random graph). Even almost all regular graphs can be identified by WL[2], resulting in a linear expected time algorithm for identifying the regular graphs \cite{Kucera87}. In general, it is difficult to find instances of graphs that are not easily identified. One source of such graphs are strongly regular graphs, which are the graphs where WL[2] stops immediately after assigning the initial colors without doing any refinements.

Algebraic graph invariants are among the most widely studied invariants.
Examples of algebraic invariants are the spectrum (the multiset of eigenvalues of the adjacency matrix), the Laplacian spectrum, the multiset of angles of the standard basis vectors with the eigenspace for a given eigenvalue. The standard basis vectors are those with a component 1 in one vertex and components 0 in all other vertices. Note that multisets rather than $n$-tuples have to be used here, because in general no ordering of the vertices can be defined in an invariant way.

The standard algebraic graph invariants have a low distinguishing power, compared to strong combinatorial invariants. Already WL[2] determines the spectrum. The WL[2] color of a vertex determines the lengths of the projections of its standard basis vectors into the eigenspaces, and the WL[2] color of an edge determines the angle between the projections of its endpoints \cite{Furer95} (see also \cite{Furer2010}). The spectrum of the $k$-th power of a graph $G$ is more powerful than the spectrum of $G$ itself, but not as powerful as WL[2k] \cite{AlzagaIP2010}.

\subsection{The Graph Isomorphism Problem}

The graph isomorphism problem, i.e., testing whether two graphs are isomorphic is not known to be in P, but not believed to be NP-complete, as this would have strange consequences like the collapse of the polynomial hierarchy. Babai \cite{Babai2015,Babai2016} has recently shown the graph isomorphism problem to be in pseudo-polynomial time (i.e., in time $2^{(\log n)^{O(1)}}$). This result builds on the milestone work of Luks \cite{Luks82}, who proved that graphs of bounded degree can be tested in polynomial time. These results rely heavily on group theoretical methods. Since the early eighties the author was involved in an oral debate, whether combinatorial methods could solve the bounded degree case too. In particular, it was open whether WL$[k]$, with $k$ being the degree of the graph, could solve the bounded degree graph isomorphism problem. This would be a very natural algorithm, running in polynomial time, or more precisely, in time $O(n^{k+1} \log n)$.
It was even not clear whether a constant $k$ would be sufficient for all graphs. Some  support for this possibility was provided by the result that WL[5] always makes at least some progress  \cite{Cameron80,GolfandK78,KlinPR88} except for some known trivial cases.

These questions have been answered by the CFI result \cite{CaiFI92}. It shows that WL$[k]$ requires $k = \Omega(n)$ in order to identify all graphs of size $n$. We now introduce this construction, 
since we use it for our proofs later. It starts with an arbitrary graph $H$ called the global graph. For the $\Omega(n)$ result, $H$ has to be an expander graph, but any low degree graph can be used for the construction. Here we only describe the interesting case of $H$ being regular of degree 3. We show how to produce two similar graphs $G$ and $\widetilde{G}$ from $H$. The graphs $G$ and $\widetilde{G}$ are not isomorphic, but WL[2] uses edge colors with the same multiplicities.
\begin{enumerate}
\item 
Every vertex $v$ of $H$ is replaced by 4 vertices $v_0,v_1,v_2,v_3$ of $G$ arranged counterclockwise in the corners of a square, but without the edges of the square. Note, that there are 3 partitions of $\{v_0,v_1,v_2,v_3\}$ into two subsets of vertices of size two:
\begin{enumerate}
\item Bottom $\{v_0, v_1\}$, Top $\{v_2, v_3\}$,
\item Left $\{v_0, v_3\}$, Right $\{v_1, v_2\}$,
\item Slash $\{v_0, v_2\}$, Backslash $\{v_1, v_3\}$,
\end{enumerate}
\item
Consider every edge $\{u,v\}$ of $H$ to consist of 2 directed edges $(u,v)$ and $(v,u)$. For every vertex $u$ of $H$ label the 3 outgoing edges in an arbitrary way with the 3 partitions a, b, c, from above.
\item
Now introduce 8 edges of $G$ to replace every edge $\{u,v\}$ of $H$. For example, if $(u,v)$ is labeled {\bf a}, and $(v,u)$ is labeled {\bf b}, then the bottom $u$-nodes are connected to the left $v$-nodes, and the top $u$-nodes are connected to the right $v$-nodes. In other words, the edge $\{u_i,v_j\}$ is introduced, if either $i \in \{0,1\}$ and $j \in \{0,3\}$, or  $i \in \{2,3\}$ and $j \in \{1,2\}$.
\end{enumerate}

Finally, $\widetilde{G}$ is constructed from $G$ by picking an arbitrary edge of $H$ and flipping the corresponding connections in $G$. In the previous example, the bottom $u$-nodes would be connected to the right $v$-nodes, and the top $u$-nodes would be connected to the left $v$-nodes.

\begin{fact}
The location of a flip is undefined.
It can easily be moved from an edge incident to a vertex $v$ of $H$ to any of the other edges incident on $v$ by doing a Bottom-Top ($\{v_0,v_1\} \leftrightarrow \{v_2,v_3\}$) exchange and/or a Left-Right ($\{v_0,v_3\} \leftrightarrow \{v_1,v_2\}$) exchange. In several steps, the flip can be moved to any edge in the same connected component.
\end{fact}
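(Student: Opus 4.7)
The plan is to encode a flip pattern as a subset $F \subseteq E(H)$: let $G_F$ denote the graph obtained from $G$ by flipping the $H$-edges in $F$, so $G = G_\emptyset$ and $\widetilde G = G_{\{e\}}$ for the chosen edge $e$. I will show that each of the two exchanges at a vertex $v$ (and their composition), when extended by the identity on all vertices outside the four $v$-nodes, is a graph isomorphism from $G_F$ to $G_{F \mathbin{\triangle} S}$ for an appropriate 2-subset $S$ of the edges of $H$ at $v$. If $S = \{e,e'\}$, then starting with $F = \{e\}$ yields $F \mathbin{\triangle} S = \{e'\}$, exhibiting an isomorphism $G_{\{e\}} \cong G_{\{e'\}}$.

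The heart of the argument is a table recording how each exchange permutes the three partitions at $v$. Placing $v_0,\dots,v_3$ counterclockwise at the corners of a square, I take the Bottom--Top exchange to be the reflection $v_0 \leftrightarrow v_3,\ v_1 \leftrightarrow v_2$ across the horizontal axis. A direct check on the six parts shows that it swaps Bottom with Top and Slash with Backslash while individually fixing Left and Right. Symmetrically, the Left--Right exchange $v_0 \leftrightarrow v_1,\ v_2 \leftrightarrow v_3$ individually fixes Bottom and Top and swaps both Left--Right and Slash--Backslash, and their composition, the $180^\circ$ rotation $v_0 \leftrightarrow v_2,\ v_1 \leftrightarrow v_3$, individually fixes Slash and Backslash and swaps both Bottom--Top and Left--Right. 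Since the eight $G$-edges arising from an $H$-edge $\{u,v\}$ form a matching between the two halves of the partition labeled at $v$'s side and the two halves of the partition at $u$'s side, swapping the halves at $v$ has exactly the effect of flipping this $H$-edge. Hence each non-trivial exchange flips precisely the two edges at $v$ whose label-partitions are the two being swapped, and leaves the third edge untouched. Every one of the three 2-element subsets of the edges at $v$ thus arises as the support $S$ of some exchange.

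This yields $G_{\{e\}} \cong G_{\{e'\}}$ whenever $e$ and $e'$ share a vertex, and iterating along a walk in the line graph of a connected component of $H$ moves the flip to any edge of that component. The only non-routine step is the enumeration of which partitions each exchange swaps; the rest is the immediate observation that a permutation of the (edgeless) gadget at $v$ extends to an isomorphism between graphs $G_F$ and $G_{F \mathbin{\triangle} S}$, since the only $G$-edges affected are those of the $H$-edges at $v$ and the effect on each of them is exactly recorded by the table.
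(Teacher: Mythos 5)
Your proof is correct and follows exactly the route the paper intends: the Fact is stated with its justification embedded (the Bottom--Top and Left--Right exchanges), and you have simply verified in detail that these two reflections and their composition realize all three $2$-element flip-sets at a vertex, which combined with the parity observation moves a single flip along any path in a connected component. Your table of which partitions each exchange swaps is accurate (Bottom--Top swaps $\{a,c\}$, Left--Right swaps $\{b,c\}$, the rotation swaps $\{a,b\}$), so this is a faithful elaboration of the paper's one-line argument rather than a different approach.
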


\begin{fact}
Only the parity of the flips matter.
 If $G$ is manipulated by introducing an even number of flips, we obtain a graph isomorphic to $G$.  If $G$ is manipulated by introducing an odd number of flips, we obtain a graph isomorphic to $\widetilde{G}$. 
\end{fact}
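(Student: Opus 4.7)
The plan is to treat flip sets as elements of the $\mathbb{F}_2$-vector space $\mathbb{F}_2^{E(H)}$, view vertex exchanges as realizing specific elements of this space, and verify that the subgroup they generate is exactly the even-weight subspace. For $F \subseteq E(H)$, write $G_F$ for the CFI graph with flips on the edges of $F$, so $G = G_\emptyset$ and $\widetilde G \cong G_{\{e\}}$ for any single edge $e$ (the latter by Fact 1).

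The key lemma is that at every vertex $v$ of $H$, each of the three involutions of $\{v_0,v_1,v_2,v_3\}$---the Bottom-Top exchange $v_0\leftrightarrow v_2,\ v_1\leftrightarrow v_3$, the Left-Right exchange $v_0\leftrightarrow v_3,\ v_1\leftrightarrow v_2$, and the third one $v_0\leftrightarrow v_1,\ v_2\leftrightarrow v_3$---fixes the pairs of exactly one of the three partitions $\mathbf{a},\mathbf{b},\mathbf{c}$ of $\{v_0,v_1,v_2,v_3\}$ and interchanges the pairs of the other two. A direct case check verifies this. Extended by the identity on every other square, such an involution is a graph isomorphism $G_F \to G_{F \triangle P_v}$, where $P_v$ is the pair of edges of $H$ incident to $v$ whose labels are the two partitions that get swapped: an edge whose $v$-label is the fixed partition has its $v$-endpoints mapped setwise into themselves so its adjacencies survive unchanged, while an edge whose $v$-label is one of the swapped partitions has its two target subsets at $v$ interchanged, which is precisely a flip on that edge. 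Thus every vertex exchange toggles \emph{exactly two} incident edges in the flip set, preserving $|F| \bmod 2$.

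Next, I would show that the subgroup $S \le \mathbb{F}_2^{E(H)}$ generated by these incident pairs (over all vertices of $H$) equals the subspace of even-weight vectors. The inclusion $S \subseteq \{F \subseteq E(H) : |F|\text{ even}\}$ is immediate, since each generator has weight two. For the reverse, it suffices to express any weight-two vector $\{e,e'\}$ as a sum of generators. If $e$ and $e'$ share a vertex, $\{e,e'\}$ is itself a generator. Otherwise, by connectedness of $H$, choose a walk $e = f_0, f_1, \ldots, f_\ell = e'$ of consecutively incident edges, and use the telescoping identity $\{e,e'\} = \sum_{i=0}^{\ell-1} \{f_i,f_{i+1}\}$ in $\mathbb{F}_2^{E(H)}$.

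Combining the two steps gives $G_F \cong G_{F'}$ whenever $F \triangle F'$ has even weight, that is, whenever $|F| \equiv |F'| \pmod 2$. Specializing to $F' = \emptyset$ yields the even case of Fact 2 and to $F' = \{e\}$ the odd case. The only delicate step is the key lemma identifying each vertex exchange with a specific pair-toggle on the flip set; once that is established, the rest reduces to linear algebra over $\mathbb{F}_2$ together with connectedness of $H$, and the main obstacle is the careful case analysis verifying which partition each exchange fixes and which two it swaps.
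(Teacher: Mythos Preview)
The paper does not prove Fact~2; it is stated as a known property of the CFI construction (implicitly referring to \cite{CaiFI92}), with Fact~1 serving as an informal hint. Your argument is a correct and standard formalization of that hint: identify flip sets with $\mathbb{F}_2^{E(H)}$, observe that each double transposition on a square fixes one of the three partitions and swaps the other two (hence toggles exactly the two incident edges carrying the swapped labels), and then use connectedness of $H$ to show these weight-two generators span the even-weight subspace. One cosmetic slip: the permutation you call the ``Left-Right exchange,'' namely $v_0\leftrightarrow v_3,\ v_1\leftrightarrow v_2$, actually \emph{fixes} the Left/Right partition (it maps $\{v_0,v_3\}$ to itself) and swaps the other two; the naming is off but the mathematical claim---each of the three double transpositions fixes exactly one partition---is correct and is all you use.
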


\subsection{Summary of Main Results}
In this paper, we study the power of WL[2] in comparison with the graph invariants $\#k$-cliques and $\#k$-cycles for different values of $k$.
In the next section we study the positive results. For cliques we only have the trivial result that 3-cliques are identified by WL[2]. For cycles, astonishingly WL[2] is much more powerful. Of course, 3-cycles and 4-cycles are identified, but surprisingly also 5-cycles and 6-cycles are identified.
Section 3 contains the negative result for 4-cliques, and Section 4 is devoted to the negative result for 8-cycles.

\section{Positive Results}

Recall that we use the words {\em path} and {\em cycle} to refer to the set of edges of a simple path or cycle. Walks (not necessarily simple paths) and closed walks are not so interesting in our context. For example, for regular graphs, their numbers are determined by the graph size and the degree.
It is not hard to see that WL[2] can easily count walks and closed walks of any length. More interesting is the task of counting (simple) paths and cycles.

We say that WL$[k]$ counts the number of $j$-cycles or solves the problem $\#j$-cycles, if it produces a multiset of colors (including their definitions) that is only produced for graphs that have exactly the same number of $j$-cycles. In the same way, we define WL$[k]$ counting the number of $j$-cliques or solving the $\#j$-clique problem. Similarly, we say that an edge $\{u,v\}$ knows a certain number, if the color of $(u,v)$ and its definition determines that number.

\begin{theorem}
 WL[2] counts the number of triangles.
\end{theorem}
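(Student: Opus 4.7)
The plan is to show that after a single refinement round of WL[2], the color of every directed edge already encodes the number of triangles containing it; summing these counts over all edges then yields the total number of triangles.

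Recall that $W^0$ assigns one of three values to an ordered pair of vertices: edge, non-edge, or self-loop. After one refinement step, $W^1(u,v)$ records $W^0(u,v)$ together with the multiset $\{(W^0(u,w),W^0(w,v)) : w \in V\}$. For a pair $(u,v)$ with $\{u,v\}\in E$, I would extract from this multiset the number $N(u,v)$ of vertices $w$ for which both entries equal the edge color. Because the self-loop color is distinct from the edge color, the choices $w=u$ and $w=v$ contribute nothing, so $N(u,v)$ is exactly the number of common neighbors of $u$ and $v$, which coincides with the number of triangles containing $\{u,v\}$. The definition of $W^1(u,v)$, which is part of the WL[2] output, exposes this multiset directly, so $N(u,v)$ is a function of the color of $(u,v)$ and its definition alone.

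To finish, I would use the fact that two edges sharing a $W^1$ color share the same value of $N$, so from the multiset of $W^1$ colors (together with their definitions) one can sum $N(u,v)$ over all ordered edge-pairs. Each triangle is counted $6$ times in this sum, once per ordered traversal of each of its $3$ edges, so dividing by $6$ gives the triangle count. Since subsequent refinement rounds only split color classes further without discarding information, the stable coloring together with the recorded history of color definitions still determines $N(u,v)$, and hence the total number of triangles. The argument is essentially a direct unwrapping of the first refinement step, and no real obstacle arises; the only point to watch is that the edge and self-loop colors are initially distinct, which is exactly the reason the self-loops at $u$ and $v$ do not pollute the count.
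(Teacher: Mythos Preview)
Your argument is correct and follows essentially the same approach as the paper: after one round each edge knows the number of triangles through it, and summing over edges recovers the total count. The only cosmetic difference is that you sum over ordered pairs and divide by $6$, whereas the paper sums over unordered edges and divides by $3$.
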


\begin{proof}
Obviously, WL[2] trivially counts the number of triangles. After 1 round, every edge knows the number of triangles it is involved in. Therefore, the multiset of all colors of edges determines the total number of triangles. 
 If $c_j$ edges are involved in $j$ triangles, then the total number of triangles is $\frac{1}{3} \sum_{j=1}^{n-2} c_j j$.
\qed \end{proof}

For $\#k$-cliques, this trivial positive result is all we get. For $\#k$-cycles we can do much better. But first we look at the problem of counting the number of  paths of length 4 between a given pair of vertices. This could easily be used to show that WL[2] also counts the total number of paths of length 4. We don't treat counting the paths of length 5, as it can be done along the lines of the $\#6$-cycles problem. Counting the paths of length $k<5$ is easy.

We say that a coloring algorithm WL$[k]$ computes a function or decides a property of graphs, if the multiset of stable colors of the $k$-tuples determines the function value or the property respectively. This means that whenever two graphs have the same multiset of colors, then they agree in the function value or the property respectively. 

Here, having the same multiset of colors of $k$-tuples can be defined in two equivalent ways. 
\begin{itemize}
\item 
The two graphs are colored simultaneously, i.e., when the names of the colors are reduced to small integers, a small integer abbreviates the same long name in both graphs.
In this scenario, there is no need to retain the color definitions.
\item 
Each graph is colored separately, but the definitions of the colors are included. The two graphs must have the same number of equally defined colors. 
Here, it is important to include an additional color definition, when the color partition is already stable.
A key example consists of two strongly regular graphs with the same number of vertices and edges, but with different parameters $\lambda$ (number of common neighbors of adjacent vertices) and $\mu$ (number of common neighbors of non-adjacent vertices).
In each graph the edge coloring is stable from the beginning, as even the first refinement round has no effect. But the new colors in the two graphs have different definitions.
\end{itemize}

Similarly, we define what it means for a $k$-tuple {\em to know} a function value or a property. It means that WL$[k]$ colors the $k$-tuple with a color (including its definition) that only shows up when the function has this value or the graph has this property respectively.

\begin{lemma} \label{lem:p4}
 WL[2] can count the number of paths of length 4 between any pair of vertices.
\end{lemma}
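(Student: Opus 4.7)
The plan is to compute $P_4(u,v)$, the number of simple paths of length $4$ from $u$ to $v$, by first letting WL[2] count all walks of length $4$ from $u$ to $v$ and then correcting via inclusion-exclusion for the walks that revisit a vertex. I will assume $u \neq v$ throughout; the case $u = v$ is trivial since a simple path cannot close on itself.

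First I would show that after two refinement rounds WL[2] knows $W_4(u,v)$, the number of walks of length $4$ from $u$ to $v$. After one round the color $c_1(u,w)$ already determines $d(w)$ (read off as the number of $z$ sitting on an edge with $w$), the adjacency $[u \sim w]$, and $W_2(u,w) = |N(u) \cap N(w)|$. The next refinement for $(u,v)$ builds the multiset of pairs $(c_1(u,w), c_1(w,v))$ over $w \in V$; from this multiset one can extract $W_4(u,v) = \sum_w W_2(u,w) W_2(w,v)$, and, as byproducts, subsidiary sums such as $\sum_{w \in N(u) \cap N(v)} d(w)$, the triangle counts $W_3(u,u)$ and $W_3(v,v)$, and so on.

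Next I would enumerate the ways a walk $u = x_0, x_1, x_2, x_3, x_4 = v$ can fail to be simple. Because there are no self-loops, consecutive coincidences are automatically impossible; the admissible non-consecutive coincidences are $x_0 = x_2$, $x_0 = x_3$, $x_1 = x_3$, $x_1 = x_4$, and $x_2 = x_4$. Inclusion-exclusion then yields $P_4(u,v) = \sum_S (-1)^{|S|} N_S(u,v)$, where $S$ ranges over subsets of these equalities compatible with $u \neq v$ and the edge constraints, and $N_S(u,v)$ counts walks satisfying all equalities in $S$. For every admissible $S$ the walk collapses onto at most four distinct vertices, each drawn from $\{u,v\}$ or from $N(u) \cap N(v)$ or from $N(u) \cup N(v)$, so $N_S(u,v)$ is a short polynomial in the local quantities $d(u)$, $d(v)$, $[u \sim v]$, $|N(u) \cap N(v)|$, $W_3(u,u)$, $W_3(v,v)$, and $\sum_{w \in N(u) \cap N(v)} d(w)$, all of which are determined by the stable WL[2] color of $(u,v)$.

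The main obstacle I anticipate is the bookkeeping: checking exactly which subsets $S$ are genuinely admissible (rather than secretly forcing some $x_i = x_{i+1}$ or $u = v$) and summing the signed contributions correctly. Once that finite case analysis is carried out, the resulting formula expresses $P_4(u,v)$ entirely in terms of quantities visible to the WL[2] color of $(u,v)$, which is exactly what the lemma requires.
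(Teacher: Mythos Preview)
Your proposal is correct and close in spirit to the paper's proof, but the bookkeeping is organized differently. The paper glues two \emph{simple} $2$-paths at the middle vertex $w=x_2$: after two rounds $(u,v)$ knows, for each $w$, the quadruple $(p_{uw}^1,p_{uw}^2,p_{wv}^1,p_{wv}^2)$, multiplies $(p_{uw}^2-p_{uv}^1 p_{wv}^1)(p_{wv}^2-p_{uv}^1 p_{uw}^1)$ to exclude $2$-paths that pass through the wrong endpoint, sums over $w$, and finally subtracts the single remaining bad pattern $u,x,w,x,v$. So there are only two correction terms. You instead start from all walks $W_4(u,v)$ and run a full inclusion--exclusion over the five possible non-consecutive coincidences $x_0{=}x_2,\;x_0{=}x_3,\;x_1{=}x_3,\;x_1{=}x_4,\;x_2{=}x_4$.

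Both arguments rest on exactly the same WL[2] data---the round-$2$ multiset of pairs $(c_1(u,w),c_1(w,v))$---and both reduce $P_4(u,v)$ to quantities readable from that multiset. The paper's version is shorter because starting from simple $2$-paths already kills the coincidences $x_0{=}x_2$ and $x_2{=}x_4$; your version has more terms but each $N_S$ is an elementary monomial in $d(u),d(v),[u\sim v],|N(u)\cap N(v)|,W_3(u,u),W_3(v,v),\sum_{w\in N(u)\cap N(v)}d(w)$, and the admissibility check (no triple of equalities survives without forcing a self-loop or $u=v$) is mechanical. Either route proves the lemma.
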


\begin{proof}
We show that every edge $(u,v)$ knows the number of paths of length 4 from $u$ to $v$. 
Let $p_{uv}^{\ell}$ be the number of paths of length $\ell$ from $u$ to $v$. Every vertex pair $(u,v)$ knows $p_{uv}^{1}$ from the start and $p_{uv}^{2}$ after 1 round.
For all $\ell_1, \ell'_1 \in \{0,1\}$ and $\ell_2, \ell'_2 \in \nn$, after 2 rounds, $(u,v)$ knows
\[
n_{\ell_1 \ell_2 \ell_1' \ell_2'} :=  \#\{w \mid w \notin \{u,v\} \land p_{uw}^i = \ell_i \land p_{wv}^i = \ell'_i \mbox{ for $i \in \{1,2\}$}\}.
\]
Then $(u,v)$ knows the number of paths of length 4 from $u$ to $v$, which is
\[ \sum_{\ell_1, \ell_2, \ell_1', \ell_2'} n_{\ell_1 \ell_2 \ell_1' \ell_2'} 
	(\ell_2 - p_{uv}^1 \, \ell'_1) (\ell'_2 - p_{uv}^1 \, \ell_1) 
	- \!\!\!\!\!\! \sum_{x \in V \setminus \{u,v\}} \!\! p_{ux}^1 (d(x)-2) p_{xv}^1, \]
where $d(w)$ is the degree of vertex $w$. Of course, $(u,w)$ knows $d(w)$ after 1 round.

For the correctness of this formula, notice when combining all paths of length~2 from $u$ to $w$ with all paths of length 2 from $w$ to $v$, we also encounter two kinds of undesirable walks. First, we don't allow the paths of length 2 from $u$ to $w$ through $v$ and from $w$ to $v$ through $u$. Finally, we subtract all walks $u, x, w, x, v$ for any vertex $x$.
\qed \end{proof}

\begin{theorem}
 WL[2] solves $\#k$-cycles for $k \leq 6$.
\end{theorem}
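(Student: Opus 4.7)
The plan is to handle $k \in \{3,4,5,6\}$ in turn, each time reducing to a quantity that WL[2] extracts from the stable coloring.

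The case $k=3$ is the preceding theorem. For $k=4$, every 4-cycle is determined by an unordered pair $\{u,v\}$ of opposite vertices together with an unordered pair of common neighbors. After one refinement round each pair $(u,v)$ knows $p_{uv}^2$, and since each 4-cycle contributes two antipodal pairs, the total count is $\tfrac12 \sum_{\{u,v\}} \binom{p_{uv}^2}{2}$, a function of the multiset of edge colors. For $k=5$, Lemma \ref{lem:p4} supplies $p_{uv}^4$ for every pair; when $\{u,v\}$ is an edge, any simple path of length 4 from $u$ to $v$ uses five distinct vertices and therefore does not contain the edge $\{u,v\}$, so closing it gives a 5-cycle. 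Each 5-cycle contains five edges, so $\#5\text{-cycles} = \tfrac15 \sum_{\{u,v\}\in E} p_{uv}^4$, which is determined by the WL[2] coloring.

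For $k=6$ I would count 6-cycles as $\tfrac16 \sum_{\{u,v\}\in E} p_{uv}^5$, reducing the theorem to showing that WL[2] determines $p_{uv}^5$ for every edge. My plan is to split every simple path $u,w_1,w_2,w_3,w_4,v$ at its middle edge $\{w_2,w_3\}$. By iterating the sift operation, the stable color of $(u,v)$ knows, for every triple of colors $(c,d,c')$, the number of ordered pairs $(w_2,w_3)$ with $\text{color}(u,w_2)=c$, $\text{color}(w_2,w_3)=d$, and $\text{color}(w_3,v)=c'$. Since the color of any pair $(x,y)$ already determines $p_{xy}^2$, this yields the total number of walks $u - w_1 - w_2 - w_3 - w_4 - v$ without any simplicity constraint.

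The correction turns walks into simple paths. Every non-simple walk satisfies at least one coincidence among $w_1 = v$, $w_4 = u$, $w_2 = v$, $w_3 = u$, $w_1 = w_3$, $w_1 = w_4$, $w_2 = w_4$, or a combination thereof. Each coincidence class reduces the offending configuration to one supported on at most five vertices and built from edges, triangles, 4-cycles, paths of length at most 4, or common-neighbor counts, all of which are provided either by Lemma \ref{lem:p4}, by the argument for $k=4$, or directly by the stable WL[2] coloring of the pair $(u,v)$. Inclusion-exclusion over the coincidence classes then yields $p_{uv}^5$.

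The main obstacle will be the inclusion-exclusion bookkeeping for $k=6$: roughly a dozen individual coincidences arise, and their pairwise and higher-order intersections must be enumerated so that every non-simple walk is subtracted exactly once. Once that case analysis is complete, every correction term is a function of the stable color of $(u,v)$, and $\#6\text{-cycles}$ is consequently an invariant of WL[2].
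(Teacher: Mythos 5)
Your treatment of $k\le 5$ is fine and essentially matches the paper: triangles by the preceding theorem, $4$-cycles via $\tfrac12\sum\binom{p_{uv}^2}{2}$, and $5$-cycles by closing the length-$4$ paths of Lemma~\ref{lem:p4} over edges. For $k=6$ you take a genuinely different decomposition (splitting each $6$-cycle at one of its edges and reducing to $p_{uv}^5$ for edges $\{u,v\}$, rather than the paper's split at a distance-$2$ pair into a $P_4$ and a $P_2$), but your argument has a real gap: the entire content of the hard case is deferred to unexecuted ``bookkeeping,'' and the one claim that carries the proof --- \emph{``every correction term is a function of the stable color of $(u,v)$''} --- is asserted without justification. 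This is exactly the point at which such arguments can fail. In the paper's own decomposition, one correction term (the $4$-cycle with a diagonal, i.e.\ an edge between two common neighbors of the anchoring pair) is provably \emph{not} known to that pair, and the proof is only rescued by observing that the global sum of that term can be re-parametrized over a different pair which does know it. You give no analogous check.

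Concretely, the danger for your decomposition is the configuration consisting of two adjacent common neighbors $x,y$ of $u$ and $v$: together with the edge $\{u,v\}$ this is a $K_4$ through $\{u,v\}$, and by Section~3 of the paper WL[2] cannot count $K_4$'s, so no pair $(u,v)$ can ``know'' that count. If any term of your inclusion--exclusion required it, the plan would collapse. It appears that it does not --- the edge set $\{ux,vx,uy,vy,xy\}$ has its two odd-degree vertices at $x$ and $y$, so no walk of length $5$ from $u$ to $v$ can have it as its image, and every configuration that does arise attaches each new vertex to at most two already-anchored ones, which WL[2] can handle by pivoting on the right pair --- but this parity/degeneracy argument is precisely what your proof must supply and currently lacks. (Your explicit list of coincidences also omits $u=w_2$ and $w_3=v$, though the hedge ``roughly a dozen'' suggests you know the list is incomplete.) Until the case analysis is carried out and each surviving configuration is shown to reduce to pair-based counts, the $k=6$ case is not proved.
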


\begin{proof}
 For $k=4$ the result is easy. Every edge $e$ can count the number of quadrangles of which it is a diagonal. In one round the edge $e=(u,v)$ knows the number of common neighbors $n(e)=p_{uv}^2$. Then the over all number of quadrangles is $\frac{1}{2} \sum _{e \in E} \binom{n(e)}{2}$.

For $k=5$ the result follows from the lemma. Every pair $(u,v)$  knows the number $p_{uv}^4$ of paths of length 4 from $u$ to $v$, and it knows whether $u$ and $v$ are adjacent. Thus $(u,v)$ knows in how many 5-cycles it is involved.

$k=6$ is the interesting case. Any closed path $v_0, v_1, \dots, v_5, v_0$ of length 6 can be broken down into a path of length 4 and a path of length 2 from $v_0$ to $v_4$.

In order to count the number of 6-cycles, we count for every fixed pair $(v_0,v_4)$  with $v_0 \neq v_4$ the number of 6-cycles $H = v_0, v_1, v_2, v_3, v_4, v_5, v_0$ which are subgraphs of  the given graph $G$ for variable $v_1, v_2, v_3, v_5$. From now on the pair $(v_0,v_4)$ is fixed. Let $\#H$ be the number of such subgraphs $H$.
Let $\#H(*)$ be the number of subgraphs consisting of a path $v_0, v_1, v_2, v_3, v_4$ of length 4 and a path $v_0, v_5, v_4$ of length 2 where $v_5$ might possibly be identified with $v_1$, $v_2$, or $v_3$.
Let $H[u_1=w_1, \dots, u_p =w_p]$ be any graph obtained from a graph of type $H$ by identifying $u_i$ with $w_i$ for $1 \leq i \leq p$. Then 
\[ \#H = \#H(*) - \#H(v_1 = v_5) - \#H(v_2 = v_5) - \#H(v_3 = v_5). \]

After 1 round, the pairs $(v_0, v_2)$, $(v_2, v_4)$, and $(v_0, v_4)$ all know the number of paths of length 2 between them.
After 2 rounds, the pair $(v_0, v_4)$, also knows the number $p_{v_0 v_4}^4$ of paths $v_0, v_1, v_2, v_3, v_4$ of length 4 from $v_0$ to $v_4$ by Lemma~\ref{lem:p4}.

Thus after 2 rounds, the pair $(v_0, v_4)$ knows $\#H(*) = p_{v_0 v_4}^2  p_{v_0 v_4}^4$, which is the number of pairs of paths  $v_0, v_1, v_2, v_3, v_4$ and $v_0,v_5,v_4$.
Not every such pair of paths can be combined to a 6-cycle.
We have to subtract the number of cases, where $v_5$ is equal to one of the vertices $v_1$, $v_2$, or $v_3$.

Let's compute $\#H(v_2 = v_5)$. 
After 1 round $(v_0,v_2)$ and $(v_2,v_4)$ know the number of triangles in which they participate. These numbers are $p_{v_0 v_2}^1 p_{v_0 v_2}^2$ and $p_{v_2 v_4}^1 p_{v_2 v_4}^2$ respectively. After 2 rounds, $(v_0,v_2)$ (for varying $v_2$) knows the multiset of these pairs of numbers. If $v_0$ is adjacent to $v_4$, then each triangle count is too high by 1, because $(v_0,v_2)$ also counts the triangle $v_0,v_2,v_4$ and $(v_2,v_4)$ also counts the triangle $v_2,v_4,v_0$. Both these triangles don't contribute to a collection of pairs of paths $v_0, v_1, v_2, v_3, v_4$ and $v_0,v_5,v_4$ intersecting only in the endpoints and in $v_2 = v_5$. Thus the number of bad subgraphs $H(v_2 = v_5)$ is
\[ \#H(v_2 = v_5) = \sum_{v_2 \in V \setminus \{v_0,v_4\}} (p_{v_0 v_2}^1 p_{v_0 v_2}^2 - p_{v_0 v_4}^1) (p_{v_2 v_4}^1 p_{v_2 v_4}^2 - p_{v_0 v_4}^1). \]
This number is known to $(v_0, v_4)$ after 2 rounds.

Now we compute  $\#H(v_1 = v_5)$ of bad subgraphs with $v_1 = v_5$ for fixed $v_0$ and $v_4$, and varying $v_1$, $v_2$ and $v_3$.
Let $\#H(v_1=v_5,*)$ be the number of graphs obtained from a graph of type $H(v_1=v_5)$ by possibly identifying $v_2$ or $v_3$ with $v_0$. Then we have
\[ \#H(v_1 = v_5) = \#H(v_1=v_5,*) - \#H(v_1=v_5, v_0=v_2) - \#H(v_1=v_5, v_0=v_3). \]

For $\ell \in \{1,2,3\}$ every pair $(u,v)$ knows the number $p_{u v}^{\ell}$ of paths of length $\ell$ from $u$ to $v$ after $\ell-1$ rounds. Thus in particular, after 2 rounds $(v_0,v_1)$ knows $p_{v_0 v_1}^1$ and $p_{v_0 v_1}^2$, $(v_1,v_4)$ knows $p_{v_1 v_4}^3$, and $(v_0,v_4)$ knows $p_{v_0 v_4}^1$ and $p_{v_0 v_4}^2$.
Thus
\[  \#H(v_1=v_5,*) = \sum_{v_1 \in V \setminus \{v_0,v_4\}} p_{v_0 v_1}^1 p_{v_1 v_4}^3 p_{v_1 v_4}^1. \]
It is easy to see that 
\[\#H(v_1=v_5, v_0=v_2) = \binom{p_{v_0 v_4}^2}{2}, \]
because $v_0$ and $v_2$ are opposite corners of a square, and 
\[\#H(v_1=v_5, v_0=v_3) = p_{v_0 v_1}^1  p_{v_0 v_4}^1 p_{v_1 v_4}^1 (p_{v_0 v_1}^2 - 1).\]
After 2 rounds, the pair $(v_0,v_4)$ knows $\#H(v_1=v_5, v_0=v_2)$ and $\#H(v_1=v_5, v_0=v_3)$.

The computation of $\#H(v_3 = v_5)$ is completely analog.

Now we determine the number $\#H(v_2 = v_5)$.
Let $\#H(v_2 = v_5,*)$ be the number of graphs obtained from a graph of type $H(v_2=v_5)$ by possibly identifying $v_1$ with $v_3$. Then we have
\[ \#H(v_2 = v_5) = \#H(v_2=v_5,*) - \#H(v_2=v_5, v_1=v_3). \]

After 1 round $(v_0,v_2)$ and $(v_2,v_4)$ know the number of triangles in which they participate. These numbers are $p_{v_0 v_2}^1 p_{v_0 v_2}^2$ and $p_{v_2 v_4}^1 p_{v_2 v_4}^2$ respectively. After 2 rounds, $(v_0,v_2)$ (for varying $v_2$) knows the multiset of these pairs of numbers. If $v_0$ is adjacent to $v_4$, then each triangle count is too high by 1, because $(v_0,v_2)$ also counts the triangle $v_0,v_2,v_4$ and $(v_2,v_4)$ also counts the triangle $v_2,v_4,v_0$. Both these triangles don't contribute to a collection of pairs of paths $v_0, v_1, v_2, v_3, v_4$ and $v_0,v_5,v_4$ intersecting only in the endpoints and in $v_2 = v_5$. Thus the number of bad subgraphs for $v_2 = v_5$ is
\[ \#H(v_2 = v_5,*) = \sum_{v_2 \in V \setminus \{v_0,v_4\}} (p_{v_0 v_2}^1 p_{v_0 v_2}^2 - p_{v_0 v_4}^1) (p_{v_2 v_4}^1 p_{v_2 v_4}^2 - p_{v_0 v_4}^1). \]
This number is known to $(v_0, v_4)$ after 2 rounds.

Here, we hit a complication when we want to compute $\#H(v_2=v_5, v_1=v_3)$. The subgraph $H(v_2=v_5, v_1=v_3)$
is a square with a diagonal form $v_1$ to $v_2$. The other corners are $v_0$ and $v_4$. The pair $(v_0,v_4)$ does not know $\#H(v_2=v_5, v_1=v_3)$. Therefore, $(v_0,v_4)$ might not know the number of 6-cycles in which the distance from $v_0$ to $v_4$ on the cycle is 2. Luckily, we don't have to know this number, but only their sum over all $(v_0,v_4)$.
Thus instead of counting the number of $H(v_2=v_5, v_1=v_3)$ for fixed $v_0$ and $v_4$, we count this number for fixed $v_1$ and $v_2$. This number $n_{v_1 v_2}$ is 0, if $v_1$ and $v_2$ are not adjacent and $\binom{p}{2}$ for $p = p_{v_1 v_2}^2$ otherwise, and the pair $(v_1,v_2)$ knows this number. Thus instead of computing the sum of $\#H(v_2=v_5, v_1=v_3)$ over all pairs $(v_0,v_4)$, we compute the sum of $n_{v_1 v_2}$ over all $(v_1,v_2)$ obtaining the same result.
\qed \end{proof}

\section{WL[2] Does not Count 4-Cliques}

\begin{proposition}
For $k \geq 2$ every $k$-tuple (and thus also every vertex) knows the multiset of colors of all $k$-tuples of vertices of the same graph.
\end{proposition}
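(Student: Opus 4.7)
The plan is to argue that the recursive unfolding of any single $k$-tuple's stable color reveals the sift structure of every stable color, and then to recover each color-class size by a short double-counting identity.

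I would begin by noting that every tuple $\bar u$ knows $n = |V|$ after one round, simply because the sift multiset at $\bar u$ has exactly $n$ elements. Fix the stable coloring $c$; by stability, the sift at a tuple depends only on its color, so for each ordered pair $(c^*, c^{**})$ of stable colors and each position $i \in \{1,\ldots,k\}$ the quantity
\begin{equation*}
 a_i(c^*, c^{**}) := |\{w \in V : c(\bar u_{i \to w}) = c^{**}\}|
\end{equation*}
is well-defined, independent of the representative $\bar u$ of color $c^*$.

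The key step is that $\bar u$ sees the entire array $(a_i(c^*, c^{**}))$, not just the row for its own color. For $c^* = c(\bar u)$ this is immediate. For any other stable color $c^*$, pick a representative $\bar v$ with $c(\bar v) = c^*$ and transform $\bar u$ into $\bar v$ by replacing one coordinate at a time; at most $k$ legal substitutions suffice, and the intermediate colors together with their sifts all appear as descendants in the unfolded definition-tree rooted at $c(\bar u)$. This is where $k \ge 2$ enters essentially, in order to make the substitution graph on stable colors connected.

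The double-counting identity then follows at once: for each position $i$ and each pair $(c^*, c^{**})$, the ordered pairs $(\bar u, \bar v) \in V^k \times V^k$ agreeing on every coordinate except $i$ with $c(\bar u) = c^*$ and $c(\bar v) = c^{**}$ are counted by both $N_{c^*}\, a_i(c^*, c^{**})$ and $N_{c^{**}}\, a_i(c^{**}, c^*)$, where $N_c$ denotes the number of $k$-tuples of color $c$. This yields
\begin{equation*}
 N_{c^*}\, a_i(c^*, c^{**}) = N_{c^{**}}\, a_i(c^{**}, c^*),
\end{equation*}
so every ratio $N_{c^*}/N_{c^{**}}$ for which some $a_i(c^*, c^{**})$ is nonzero is known to $\bar u$, and connectivity of the substitution graph propagates this to every pair. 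Together with the normalization $\sum_{c^*} N_{c^*} = n^k$, which is known because $n$ is, this recovers every $N_{c^*}$ and hence the whole multiset of colors of $k$-tuples. The main obstacle I foresee is formalizing what it means for $\bar u$ to \emph{know} something in terms of its unfolded color definition, and in particular pinning down carefully that the unfolded tree at a stable color really does expose the sifts of every color in the graph; the double counting itself is completely elementary once that is in hand.
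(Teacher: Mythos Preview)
The paper states this proposition without proof, treating it as folklore, so there is no argument in the paper to compare against. Your proposal is essentially correct and would fill the gap, but it takes a longer route than necessary. The double-counting identity $N_{c^*}\,a_i(c^*,c^{**})=N_{c^{**}}\,a_i(c^{**},c^*)$ and the subsequent ratio-chasing can be bypassed entirely: from $\bar u$ one reads off the multiset $M_1=\{\!\{c(w,u_2,\ldots,u_k):w\in V\}\!\}$ as the position-$1$ part of the sift; since the colouring is stable, each colour occurring in $M_1$ has a well-defined position-$2$ sift, and aggregating these with the multiplicities from $M_1$ gives $M_2=\{\!\{c(w_1,w_2,u_3,\ldots,u_k):w_1,w_2\in V\}\!\}$; after $k$ such sweeps one obtains precisely the multiset of colours of all of $V^k$, with the correct multiplicities already built in and no normalisation needed. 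Your connectivity and unfolding observations are exactly what make this iteration legitimate, so the core insight is the same; only the bookkeeping is heavier in your version.

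One point deserves sharpening. Your remark that $k\ge 2$ ``enters essentially in order to make the substitution graph on stable colors connected'' does not locate the true obstruction. Under the sift-based update rule the substitution graph on $1$-tuples is already complete, so connectivity is automatic for every $k\ge 1$. The proposition fails for $k=1$ because the paper defines WL[1] via neighbour multisets rather than sift: a vertex in a $d$-regular graph then learns $d$ but not $n$, and two $d$-regular graphs of different orders produce vertices with identical WL[1] colours but different global colour multisets. This does not affect the validity of your argument for $k\ge 2$; it only means the accompanying explanation of the boundary case should be revised.
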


Note that the result does not hold for $k=1$. 

The definition of "knowing" immediately implies the following.

\begin{corollary}
For $k \geq 2$ two graphs agree in the color of one $k$-tuple, iff they agree in the multiset of  colors of all $k$-tuples.
\end{corollary}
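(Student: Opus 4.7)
My plan is to read the corollary as an immediate unpacking of the proposition together with the paper's definition of ``knowing.'' The proof therefore reduces to checking each direction of the biconditional and making sure the dependency invoked by the proposition is genuinely uniform across graphs.

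For the forward direction, assume two graphs $G_1$ and $G_2$ agree in the stable WL$[k]$ color of some $k$-tuple, i.e.\ there exist $(u_1,\dots,u_k)\in V(G_1)^k$ and $(v_1,\dots,v_k)\in V(G_2)^k$ whose colors, together with the color definitions accumulated during refinement, coincide. The proposition asserts that in each graph separately this color determines the multiset of colors of all $k$-tuples. Since ``determines'' in the paper means that there is a single function, depending only on the color and its definition, from colors to the invariant in question, a shared color yields the same output on each side. Thus the two multisets of $k$-tuple colors are identical.

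The backward direction is even easier. Suppose the multisets of $k$-tuple colors of $G_1$ and $G_2$ coincide. Then every color that appears in one multiset appears in the other, so at least one common color $c$ is realized by some $k$-tuple in each graph. The two graphs therefore agree in the color of that $k$-tuple.

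There is essentially no obstacle once the proposition is granted: the forward direction is a functional-dependency argument, and the backward direction is a triviality about multisets sharing an element. The only delicate point worth flagging is the insistence that the color \emph{together with its definition} travels with the $k$-tuple; without the definitions the extraction function guaranteed by the proposition need not be the same in both graphs, in exact agreement with the second bullet of the paper's discussion of what ``the same multiset of colors'' means.
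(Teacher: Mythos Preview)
Your argument is correct and matches the paper exactly: the paper offers no separate proof but simply notes that the corollary follows immediately from the proposition together with the definition of ``knowing,'' which is precisely the unpacking you carry out. The only addition on your side is spelling out the (trivial) backward direction, which the paper leaves implicit.
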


We consider the CFI construction with the global graph being the simplest regular degree 3 graph, the 4-clique $K_4$. Assume, the vertices $v_0, \dots, v_3$ are arranged in consecutive corners of a square. For the vertices of $G$, we use double indices. The vertex $v_{ij}$ is the $j$th vertex in the $i$th corner ($i,j \in \{0,1,2,3\}$). Assume, we have assigned partition $\bf a$ to $(v_i,v_{i+1 \bmod 4})$, and partition $\bf c$ to $(v_{i+1 \bmod 4},v_i)$.
Thus partition $\bf b$ is assigned to the 4 diagonal directions. 

For every global graph $H$, WL[2] produces edge colors with the same multiplicities in the nonisomophic graphs $G$ and $\widetilde{G}$ produced by the CFI construction \cite{CaiFI92}. In fact this is the purpose of this construction.
In our case with $H = K_4$, this is immediately clear, as the the two graphs are strongly regular. Thus we have just 3 edge colors, one for original edges, one for non-adjacent disjoint pairs, and one for self-loops.

We say that two invariants are incomparable, if neither of them implies the other.

\begin{theorem}
$G$ and $\widetilde{G}$ differ in their number of occurrences of the subgraph $K_4$.
 WL[2] is incomparable with the invariant $\#4$-cliques.
\end{theorem}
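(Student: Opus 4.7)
The plan is to count the number of 4-cliques in each of $G$ and $\widetilde{G}$ exactly, show that the counts differ, and then invoke the observation (in the paragraph preceding the theorem) that $G$ and $\widetilde{G}$ are strongly regular with identical parameters and hence WL[2]-indistinguishable. First I would reduce to 4-cliques with one vertex at each corner of $H$: the four vertices $v_{i,0},\dots,v_{i,3}$ replacing a single corner form an independent set in both $G$ and $\widetilde{G}$, so no two can lie in a clique; since $H=K_4$ has four corners and a $K_4$ has four vertices, every 4-clique picks exactly one vertex per corner.

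Next I would algebraise adjacency over $\mathbb{F}_2$. Attaching to each index $j\in\{0,1,2,3\}$ the triple $(\pi_a(j),\pi_b(j),\pi_c(j))$ recording the block of $j$ in each partition gives a bijection onto the four zero-parity triples of $\mathbb{F}_2^3$. With the convention that the first block of every partition contains index $0$, the CFI rule reads $\{v_{u,i},v_{v,j}\}\in E(G)$ iff $\pi_{P_{u,v}}(i)=\pi_{P_{v,u}}(j)$, and the same in $\widetilde{G}$ except that on the flipped edge $e_0$ the equality is negated. A 4-clique with one vertex per corner therefore corresponds to an $\mathbb{F}_2$-value $x_e^u$ on each half-edge of $H$, subject to the local parity $\sum_{e\ni u}x_e^u=0$ at each corner $u$ (this being the identity $\pi_a+\pi_b+\pi_c\equiv 0$), the agreement $x_e^u=x_e^v$ on every unflipped edge, and the single disagreement $x_{e_0}^u+x_{e_0}^v=1$ on the flipped edge.

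In $G$ (no flips) one may write $x_e$ unambiguously on each of the $6$ edges of $H$, leaving a homogeneous $\mathbb{F}_2$-system in $6$ unknowns subject to the $4$ corner equations of rank $3$ (each variable appears in exactly two of them, so the four sum to zero, while any three are readily seen to be independent). Hence $\#K_4(G)=2^{6-3}=8$. In $\widetilde{G}$ the sum of the four corner equations still annihilates its left-hand side but now equals $1$ on the right because of the flip, so the system is inconsistent and $\#K_4(\widetilde{G})=0$. I expect the main obstacle to be isolating this $\mathbb{F}_2$-structure transparently; once it is in place, the argument collapses to a single summation of the four corner relations.

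The incomparability then follows cheaply. WL[2] yields identical color multisets on $G$ and $\widetilde{G}$ while $\#4$-cliques separates them, so WL[2] does not determine $\#4$-cliques. Conversely $\#4$-cliques is a single integer and clearly cannot encode WL[2]: for example any two non-isomorphic trees on the same number of vertices share $\#4$-cliques $=0$ but are already separated by WL[1], hence by WL[2].
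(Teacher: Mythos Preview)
Your proof is correct and reaches the same counts ($8$ versus $0$ copies of $K_4$) as the paper, but by a genuinely different route. The paper argues by direct inspection: fixing the flip on the diagonal $\{v_1,v_3\}$, it starts at $v_{00}$, follows explicit adjacencies corner by corner, and checks in a couple of cases (using automorphisms to cut the work down) that every attempt to build a $K_4$ closes up in $G$ and fails at the last edge in $\widetilde{G}$. You instead extract the underlying $\mathbb{F}_2$-linear structure of the CFI gadget and reduce the count to the size of the solution set of a linear system on the edges of $H$; the flip becomes an inhomogeneous term that makes the four vertex-parity equations sum to $0=1$. Your approach is more conceptual and explains transparently \emph{why} the twist annihilates all $K_4$'s (and would generalise immediately to other global graphs $H$), while the paper's hands-on check is shorter and entirely self-contained, requiring no appeal to the parity formalism. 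For the converse direction of incomparability the paper simply contrasts a $P_3$ with a $C_3$; your invocation of the (true but nontrivial) fact that WL[1] identifies all trees is correct but heavier than needed---any single concrete pair such as $P_4$ versus $K_{1,3}$ already does the job by degree sequence alone.
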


\begin{proof}
 Consider $\widetilde{G}$ having the flip along the $\{v_1, v_3\}$ diagonal edge. In both, $G$ and $\widetilde{G}$ start with $v_{00}$. It is adjacent to $v_{10}$ and $v_{13}$. The vertices $v_{00}$ and $v_{10}$ are adjacent to $v_{20}$ in both $G$ and $\widetilde{G}$. Likewise, the vertices $v_{00}$ and $v_{20}$ are adjacent to $v_{30}$ in both $G$ and $\widetilde{G}$.
 Finally, $v_{30}$ is adjacent to $v_{20}$ forming a clique in $G$, but $v_{30}$ is not adjacent to $v_{20}$ forming no clique in $\widetilde{G}$. 
 Likewise, for every start in one of the vertices of $v_0$, there are two neighbors in $v_1$. Then there are twice unique common neighbors of two previously chosen vertices both in $G$ and $\widetilde{G}$. Finally in $G$ the chosen vertices in $v_3$ and $v_1$ are adjacent, but not in $\widetilde{G}$. 
 
 Considering some automorphisms, this can be verified by checking 2 cases instead of 8. The result is $G$ has 8 $K_4$, while $\widetilde{G}$ has none, even though the edge colors have the same multiplicities in $G$ and $\widetilde{G}$.
 
 That counting $K_4$ is sometimes weaker than WL[2] is trivial, e.g., take a path of length 2 and a 3-cycle.
\qed \end{proof}

\section{Difficult Cycles}
We show that WL[2] does not identify cycles of length 8. We give a clear argument about where to look for counter examples. But the actual example is created by computer. 

It is difficult to find a counter example, because WL[2] is very powerful and identifies almost all graphs. Thus we use again our example from the previous section. The non-isomorphic graphs $G$ and $\widetilde{G}$ are created with the CFI method from a tetrahedron. As the single flip in $\widetilde{G}$ can be pushed into any edge it is clear that the 2 graphs have the same number of occurrences of any subgraph that does not involve all the edges of the global graph. The global graph is a $K_4$. It has 6 edges, but the shortest closed walk through all edges has length 8. Thus the shortest cycles that might have a different count in $G$ and $\widetilde{G}$ are necessarily of length at least 8. Indeed we succeed. For all even lengths $k$ between the minimum 8 and the maximum 16 (Hamiltonian cycles), the counts in $G$ and $\widetilde{G}$ are different.

As our graphs are small, we can use a pretty brute force algorithm to count the cycles starting at a fixed vertex. For each of the two graphs, the count $C_k^v$ (the number of $k$ cycles through any given start vertex $v$)  does not depend on the chosen start vertex, because both graphs are vertex transitive. $\#k$-cycles, the total number of cycles of length $k$ is $n$ times $C_k^v$ divided by the length $k$ of the cycles.

\begin{theorem}
 WL[2] does not identify the number of 8-cycles.
\end{theorem}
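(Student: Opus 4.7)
The plan is to exhibit the non-isomorphic CFI pair $G, \widetilde{G}$ built from the global graph $H=K_4$ as described just before the theorem, and show directly that the number of $8$-cycles in $G$ differs from the number in $\widetilde{G}$. Since both graphs are strongly regular with identical parameters, WL[2] stabilizes at the trivial partition (original edges, non-edges, self-loops) with identical multiplicities, so any such pair is immediately a witness against any invariant that distinguishes them.

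First I would reduce ``which cycle lengths could possibly differ'' to a cleanly bounded question. By Fact~1 the single flip defining $\widetilde{G}$ may be pushed to any edge of $H$ within the same connected component, and by Fact~2 only the parity of the flips matters. Consequently, any subgraph $S$ of the blown-up graph whose projection to $H$ misses at least one edge of $H$ occurs the same number of times in $G$ and in $\widetilde{G}$: push the flip to a missing edge of $H$ and the two graphs look identical along $S$. Hence a cycle $C$ whose count differs in $G$ and $\widetilde{G}$ must project to a closed walk in $H=K_4$ that traverses every edge of $H$. Since every vertex of $K_4$ has odd degree, $K_4$ has no Eulerian circuit; the minimum even-degree multigraph obtained by duplicating edges requires at least two extra edges, so the shortest closed walk covering all six edges has length $6+2=8$. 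This justifies looking at length $8$ as the first possible discriminator.

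Next I would set up the counting. Both $G$ and $\widetilde{G}$ are vertex-transitive (the CFI construction on a vertex-transitive $H$ inherits enough symmetry, and in any case each vertex blow-up $\{v_0,\dots,v_3\}$ is symmetric under the automorphisms generated by the Bottom-Top and Left-Right exchanges of Fact~1), so $\#8\text{-cycles}=\frac{n}{8}\,C_8^v$ for a fixed start vertex $v$. Fixing $v$ (say $v_{00}$ in the notation established in Section~3), I would enumerate the $8$-cycles through $v$ by the projection argument: each such cycle projects onto a closed walk of length $8$ in $K_4$ that uses every edge of $K_4$, and there are only finitely many such walks up to rotation. For each such walk, I would lift it to the blow-up and count the number of cycles in $G$ and in $\widetilde{G}$ that project to it, using the explicit connection rule from step~3 of the CFI construction (a length-$8$ walk in $K_4$ corresponds to an alternating sequence of partition labels, and the lift is determined by the choices at each blown-up square). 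The contribution from each walk can be computed by multiplying the local compatibility conditions along the walk, and here the parity of the flip enters: for walks whose cumulative ``parity'' along the chosen edges is even, $G$ contributes and $\widetilde{G}$ does not (or vice-versa).

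The main obstacle is the final numerical verification: showing that after summing contributions over all closed length-$8$ walks in $K_4$ covering every edge, the totals in $G$ and in $\widetilde{G}$ genuinely differ rather than cancel. This is the step the author defers to direct computer enumeration, and I would do the same: implement the CFI construction on $K_4$ with the specified partition labeling, fix one flip edge in $\widetilde{G}$, and brute-force count $C_8^{v_{00}}$ in each graph. It suffices to report any discrepancy; since the edge-color multisets of $G$ and $\widetilde{G}$ agree (both being strongly regular with identical parameters), any such discrepancy immediately shows that WL[2] cannot determine $\#8$-cycles.
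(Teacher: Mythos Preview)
Your proposal is correct and takes essentially the same approach as the paper: exhibit the CFI pair $G,\widetilde{G}$ over $H=K_4$, observe that cycles whose projection misses an edge of $H$ are counted identically (so length~8 is the first candidate), and then verify by direct computer enumeration that the $8$-cycle counts differ. The paper's own proof is literally ``By Computer'' together with Table~1; your write-up supplies more of the surrounding justification (the Euler/Chinese-postman argument for why 8 is minimal, vertex-transitivity to reduce to a single start vertex), but the method and the witness pair are identical.
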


\begin{proof}
 By Computer.
\qed \end{proof}

\begin{table}[t]
\label{table:cyclecount}
\begin{center}
\begin{tabular}{||r|r|r||r|r|r||}
\hline \hline
$n$ & not twisted & twisted & $n$ & not twisted & twisted \\
\hline
1 & 0 & 0 &  9 & 34368 & 33920 \\ 
2 & 48 & 48 &  10 & 91296 & 92256 \\ 
3 & 32 & 32 &  11 & 211968 & 216192 \\ 
4 & 60 & 60 &  12 & 417264 & 423216 \\ 
5 & 288 & 288 &  13 & 670464 & 674304 \\ 
6 & 1248 & 1248 &  14 & 822528 & 824448 \\ 
7 & 4032 & 4032 &  15 & 678912 & 680960 \\ 
8 & 11952 & 11688 &  16 & 284112 & 281232 \\ 
\hline
\end{tabular}
\end{center}
\caption{The number of cycles of length $n$ in $G$ and $\widetilde{G}$}
\end{table}

Table~1 is the output of the Cycle Count program. It shows that for lengths up to 7,  the number of cycles is equal in the two graphs. Starting at length 8, the number of cycles differ, i.e., WL[2], cannot count the number of 8-cycles. We knew that for this pair of graphs, the numbers have to be the same for lengths up to 6, because the graphs $G$ and $\widetilde{G}$ are constructed such that WL[2] does not detect any difference between them. It is open whether WL[2]  can always count the number of 7-cycles.

Interestingly enough, there is other evidence that the complexity changes after 7. For $k \leq 7$, Alon et al.\ \cite{AlonYZ97} can count $k$-cycles in time $O(n^{\omega})$, where $\omega < 2.373$ \cite{Williams2012} is the exponent of matrix multiplication, while
Flum and Grohe \cite{FlumG2004} show that with $k$ as a parameter the problem of counting $k$-cycles is \#W-complete.

\bibliographystyle{plain}

\end{document}